\newcommand{\R}{\mathbb{R}}
\newcommand{\C}{\mathbb{C}}
\newcommand{\Z}{\mathbb{Z}}
\newcommand{\N}{\mathbb{N}}
\newcommand\ulx{\underline{x}}
\newcommand\ulf{\underline{F}}
\newcommand\ull{\underline{L}}
\newcommand\uly{\underline{y}}
\newcommand{\FS}[1]{#1} 
\newcommand{\RS}[1]{\mathcal{#1}} 
\begin{document}


\markboth{D.~Gerth, B.~Hofmann, S.~Birkholz, S.~Koke, and
G.~Steinmeyer}{Regularization of an autoconvolution problem in
ultrashort laser pulse characterization}


\title{{\itshape Regularization of an autoconvolution problem in
ultrashort laser pulse characterization}}

\author{Daniel Gerth$^{\rm a,b}$,\, Bernd Hofmann$^{\rm a}$ $^{\ast}$
\thanks{$^\ast$Corresponding author. Email: hofmannb@mathematik.tu-chemnitz.de
\vspace{6pt}},\, Simon Birkholz$^{\rm c}$,\, Sebastian Koke$^{\rm
c}$,\, and~G\"unter~Steinmeyer$^{\rm c,d}$\\\vspace{6pt} $^{\rm
a}${\em{Chemnitz University of Technology, Department of
Mathematics, 09107 Chemnitz, Germany}}; $^{\rm b}${\em{Johannes
Kepler University, Industrial Mathematics Institute,
Altenbergerstra{\ss}e 69, 4040 Linz, Austria}}; $^{\rm c}${\em{Max
Born Institute for Nonlinear Optics and Short Pulse Spectroscopy,
12489 Berlin, Germany}}; $^{\rm d}${\em{Optoelectronics Research
Centre, Tampere University of Technology, 33101 Tampere,
Finland}}\\\vspace{6pt}\received{January 2013}}

\maketitle

\begin{abstract}
An ill-posed inverse problem of autoconvolution type is investigated. This inverse problem occurs in nonlinear optics in the context of ultrashort laser pulse characterization.
The novelty of the mathematical model consists in a physically required extension of the deautoconvolution problem beyond the classical case usually discussed in literature: (i) For measurements of ultrashort laser pulses with the self-diffraction SPIDER method, a stable approximate solution of an autocovolution equation with a complex-valued kernel function is needed. \linebreak (ii) The considered scenario requires complex functions both, in the solution as well as in the right-hand side of the integral equation. Since, however, noisy data are available not only for amplitude and phase functions of the right-hand side, but also for the amplitude of the solution, the stable approximate reconstruction of the associated smooth phase function represents the main goal of the paper.
An iterative regularization approach will be described that is specifically adapted to the physical situation in pulse characterization, using a non-standard stopping rule for the iteration process of computing regularized solutions. The opportunities and limitations of regularized solutions obtained by our approach are illustrated by means of several case studies for synthetic noisy data and physically realistic complex-valued kernel functions. Based on an example with focus on amplitude perturbations, we show that the autoconvolution equation is locally ill-posed everywhere. To date, the analytical treatment of the impact of noisy data on phase perturbations remains an open question. However, we show its influence with the help of numerical experiments. Moreover, we formulate assertions on the non-uniqueness of the complex-valued autoconvolution problem, at least for the simplified case of a constant kernel. The presented results and figures associated with case studies illustrate the ill-posedness
phenomena also for the case of non-trivial complex kernel functions.

\bigskip


\end{abstract}

\section{Introduction} \label{sec:intro}

In the early 1990s motivated by applications from spectroscopy (cf.~\cite{Bau91}) and stochastics (cf.,~e.g., \cite[p.74]{RI92})
contributions to the deeper mathematical and numerical analysis of deautoconvolution problems as a class of inverse problems in spaces of continuous or quadratically integrable real functions were made. Precisely, deautoconvolution problems under consideration were mostly aimed at finding non-negative functions $x$ with compact support ${\rm supp}(x) \subset \R$ from its autoconvolution $x*x$. After transformation to the unit interval ${\rm supp}(x) \subseteq [0,1]$ they consist in finding solutions of the autoconvolution equation
\begin{equation} \label{eq:auto1}
\int \limits _0^s x(s-q)\,x(q)\,d q\,= \, y(s)\,,
\end{equation}
where the support of $y$ belongs to the interval $[0,2]$. Since the autoconvolution operator $x \mapsto x*x$ is nonlinear and `smoothing', the deautoconvolution problem is ill-posed in the sense that for given $y$ the solutions $x$ need not be uniquely determined and mainly small
perturbations in the right-hand side $y$ caused by noisy data may lead to arbitrarily large errors in the solution. To overcome the negative consequences of ill-posedness up to some extent
some kind of regularization is required. Regularization techniques allow us to find stable approximate solutions of equation (\ref{eq:auto1}) based on auxiliary problems.
For data $y$ on the subinterval $0 \le s \le 1$ the paper \cite{GH94} has analyzed the situation of equation (\ref{eq:auto1}) and the application of Tikhonov's regularization method including its convergence properties, whereas in \cite{FH96} the situation of data $y$ on the whole interval $0\le s \le 2$ has been studied. Alternative regularization methods applied to equation (\ref{eq:auto1}) and specific numerical approaches were also discussed in \cite{Janno97,Janno00,FGH99,Ram02,CL05,vW08,DaiLamm08}.

Recently, the research group `Solid State Light Sources' of the Max Born Institute for Nonlinear Optics and Short Pulse Spectroscopy, Berlin, hit on the autoconvolution problem in the context of a new approach in ultrashort laser pulse characterization called Self-Diffraction SPIDER or short, SD-SPIDER (cf.~\cite{KBB08}). For phase reconstruction as an auxiliary problem the solution of equation (\ref{eq:auto1}) is needed, but for complex functions
$x: [0,1] \subset \R \to \C$ to be determined from complex observations $y: [0,2] \subset \R \to \C$. To our knowledge, a thorough analysis of the complex case in deautoconvolution is still
missing in the literature, in particular as the ill-posedness phenomenon arises in the complex case, too. Moreover, the occurrence of a device-related kernel function $k: [0,2]\times [0,1] \to \C$ involved in the mapping $x \mapsto y$, which is non-trivial in the sense that $k \not \equiv 1$, constitutes a challenging additional difficulty in connection with
this inverse problem behind SD SPIDER. So as a part of the SD SPIDER approach it would be necessary to solve (after transformation of the variables to the unit interval) the equation
\begin{equation} \label{eq:auto2}
\int \limits _0^s k(s,q)\; x(s-q)\,x(q)\,dq\,= \, y(s),\qquad 0 \le s \le 2,
\end{equation}
in a stable approximate manner when only noisy data of $y$ are given. To simplify the notation we write integrals like on the left-hand side of (\ref{eq:auto2})
even if a function in the integrand is not defined there, as this is the case for $x(s-q)$ if $s-q>1$. Then we set the corresponding function values as zero
and avoid to distinguish the integral representations for $0 \le s \le 1$ and $1< s \le 2$.

The equation (\ref{eq:auto2}) is a complex-valued and kernel-based generalization
of equation (\ref{eq:auto1}) with solution
\begin{equation} \label{eq:auto3}
x(q)\,=\,A(q)\,\exp[i \varphi(q)], \qquad 0 \le q \le 1,
\end{equation}
and right-hand side
\begin{equation} \label{eq:auto4}
y(s)\,=\,B(s)\,\exp[i \psi(s)], \qquad 0 \le s \le 2.
\end{equation}
We consider in this paper two different aspects of solving the integral equation (\ref{eq:auto2}) as a mathematical model for inverse problems. The general inverse problem (a)
consists in finding the complex function $x$ in  (\ref{eq:auto2})  from noisy data of the complex function $y$ and for a given complex-valued kernel $k$.
Alternatively, the SD-SPIDER-motivated specification (b) lies in finding
the phase function $\varphi$ in (\ref{eq:auto3}) from noisy data of $y$ and given $k$ when additional observations $\hat A$ of the amplitude function $A$ are available.
Below we will emphasize that the identifiability of the phase $\varphi$ requires measurement data of both the phase function $\psi$ as well as its corresponding amplitude function $B$ in (\ref{eq:auto4}).
In order to find the phase function $\varphi$ in problem (b) under some smoothness assumptions we will suggest a Tikhonov regularization approach (cf., e.g., \cite{EHN96,Yag98}) combined with a specifically adapted strategy for choosing the regularization parameter $\alpha>0$. In this context, iterative procedures of Levenberg-Marquardt type (cf., e.g., \cite{KNS08}) are required to compute acceptable approximations of the Tikhonov-regularized solutions in an efficient manner. Exploiting the additional knowledge of noisy data for the amplitude function $A$ allowed us to construct an adapted stopping rule for the developed iterative regularization approach. Moreover, we illustrate the quite acceptable work of this approach by a numerical case study.

Currently, not all questions concerning
(a) and (b) can be answered by the authors. Therefore, it makes sense to show the local ill-posedness of problem (a) by an analytic example and to illustrate ill-posedness phenomena of (b) by numerical case studies. At the moment the analytical treatment of
the impact of noisy data on phase perturbations is an open question. Moreover, it seems to be reasonable to present a uniqueness assertion on the inverse problem (a) in the case $k \equiv 1$ since assertions for general
complex kernels are missing to our knowledge. On the other hand, all numerical case studies for problem (b) were performed with physically relevant kernel functions $k$.

At this point we should note that the problem of finding a function (\ref{eq:auto3}) solving the equation (\ref{eq:auto2}) is a generalization of the problem of recovery of a compactly supported and complex-valued function from the modulus of its Fourier transform. This so-called phase retrieval problem and its applications, for example in optics, electron microscopy and astronomy, were intensively studied in the literature based on the seminal paper
\cite{Kli85}, and we refer to the review paper \cite{KST95}, to  \cite[Section~1.2]{SKHK12}, and references therein. In particular, rigorous uniqueness results were proven for the one- and two-dimensional phase retrieval problem
(cf., e.g., \cite{Kli85,Kli89,KS94,Kli06}). Moreover, numerical methods were proposed in \cite{KS94,KST95}.

The paper is organized as follows: in Section 2 we briefly review the role of deconvolution and decorrelation in the characterization of laser pulses. Then we analyze in some more detail the previously unsolved deconvolution problem in self-diffraction SPIDER, i.e., a specific variant of laser pulse characterization methods. Leading from the notation commonly used in the physical literature to the one employed in mathematics, we reformulate the abstract mathematical problem behind and investigate its ill-posedness in Section 3. Subsequently, in Section 4, we derive consequences of the Titchmarsh convolution theorem for a constant kernel function $k$ and present an adapted regularization approach in Section 5. Finally, we consider several case studies based on synthetic noisy data and illustrate the convergence behavior of the iteration procedure, which requires a non-standard stopping rule. After a summary of the autoconvolution problem and its specific regularization
approach, we conclude with a brief outlook at the relevance of the findings in the physical sciences.

\section{Physical background} \label{sec:physics}

\subsection{The evolution of ultrashort pulse characterization}

Ultrashort laser pulses constitute the shortest man-made
controllable events, with demonstrated pulse durations of 4
femtoseconds ($4 \times 10^{-15}$\,seconds) and below
\cite{SS06,RBH08,KLH10}. Using wavelength conversion techniques,
pulses as short as $8 \times 10^{-17}$\,s have been produced
\cite{SBC06,GSH08}, which is in the range of the fastest transient
events in atoms and molecules. Quite remarkably, these generated
pulses approach a limiting width of a single optical cycle of the
underlying electric field carrier, as illustrated in Figure~\ref{fig:IllustrateTimeFrequency}. Nevertheless, these remarkable
technological achievements also cause a serious dilemma for their
accurate characterization and measurement. Temporal resolution of a
dynamical process is always limited by the gate time of the
sampling process, similar as in photography, where temporal resolution is
dictated by fastest available shutter speed. As there simply exists
no gate process faster than the duration of the shortest laser
pulses, the history of laser pulse characterization and ultrafast
spectroscopy has also always been a history of deconvolution and
decorrelation.

\begin{figure}
\includegraphics[width=\textwidth]{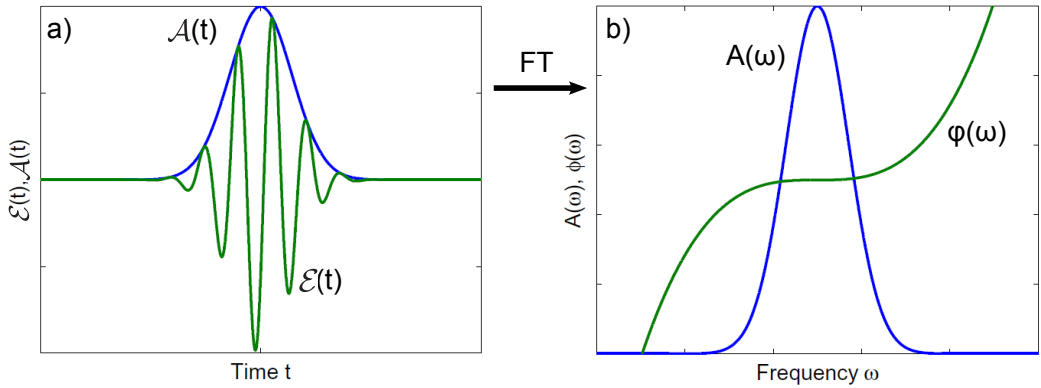}
\caption{(a, left) Illustration of the pulse to be characterized. The shortest generated laser pulses comprise only few oscillation of the electrical field $\RS{E}(t)$. The aim of pulse characterization is to retrieve the temporal evolution of the envelope of the electric field $\RS{A}(t)$ or of the intensity $\RS{I}(t) = |\RS{A}(t)|^{2}$. (b, right) The Fourier domain representation of the pulse is often utilized within the process of pulse retrieval. Therefore, it is decomposed into its amplitude $\FS{A}(\omega)$ and the spectral phase $\FS{\varphi}(\omega)$.}\label{fig:IllustrateTimeFrequency}
\end{figure}

From a physicist's point of view, a practical and easy-to-implement remedy for the fundamental dilemma of pulse characterization is the use of identical replicas of the input pulse as the pulse under investigation and for the gate function \cite{A67}. For
implementation of the temporal gate, one has to form the product of
the two functions, which is conveniently done by employing a
nonlinear optical process. In the simplest case of sum frequency
generation, this process generates the product of two intensity
envelopes $\RS{I}_1(t)$ and $\RS{I}_2(t)$, with one of these envelopes
serving as the gate event. Varying the delay between these
temporally dependent signals then allows recording their
cross-correlation
\begin{equation}
\RS{C}(\tau) \propto \int_{-\infty}^{\infty} \RS{I}_1(t) \RS{I}_2(t-\tau) \, {\rm
d}t,
 \label{eq:corr}
\end{equation}
where we employed the usual sign convention for the delay $\tau$ in
optics and the integration over time $t$ resembles the integration that is carried out by the slow detector recording the correlation signal. Reconstruction of $\RS{I}_1(t)$ with known correlation function
$\RS{C}(\tau)$ and gate function $\RS{I}_2(t)$ constitutes an inverse problem, which is frequently solvable with satisfactory
precision, see, e.g., \cite{GSK01} for a practical example. It is
important to understand that experimental noise mandates the gate
function to be kept as simply structured as possible. Ideally, it
should be chosen a single-maximum function with smallest possible
temporal duration.

Using identical functions $\RS{I}(t) \equiv \RS{I}_1(t)= \RS{I}_2(t)$,
Eq.~(\ref{eq:corr}) describes an autocorrelation, and the
decorrelation problem turns out to be only ambiguously solvable
\cite{H64,BS79}. This can be easily seen by use of the Fourier
convolution theorem. In the spectral domain, the correlation is
written as a product
\begin{equation}
\FS{C}(\omega) \propto
\FS{I}(\omega)\FS{I}^\ast(\omega)=\left\vert \FS{I}(\omega)
\right\vert^2, \label{eq:fourier}
\end{equation}
where $\omega$ is the optical angular frequency and $\FS{I}(\omega)$
is the spectral intensity. The Fourier
transform $\FS{f}(\omega)$ of a function $\RS{F}(t)$ is given by
\begin{equation}
\FS{f}(\omega)=\int\limits_{-\infty}^{\infty} \RS{F}(t) \exp\left(-
i \omega t \right) {\rm d}t.
\end{equation}
In Eq.~(\ref{eq:fourier}), the phase of the complex-valued
$\FS{I}(\omega)$ does not affect the correlation signal
$\FS{C}(\omega)$. This means that all information on temporal pulse
asymmetry in the time domain does not enter into the autocorrelation
signal. Simplistic reconstruction, trying to revert the absolute
square in (\ref{eq:fourier}) by the principal square root,
therefore always yields a symmetric reconstructed $\RS{I}(t)$.
Consequently, decorrelation does not unambiguously work unless
additional information on pulse asymmetry is obtained.

Several ways have been investigated to resolve this issue. Early
attempts \cite{NMY89,PR98} relied on the availability of additional
spectral information $\FS{I}(\omega)$, further attempts included
more complex types of autocorrelations \cite{HLN07}. It could be
shown, however, that even the tiniest amounts of measurement noise
render all these concepts unpractical \cite{CW01}, yet again
thwarting unambiguous reconstruction of $\RS{I}(t)$. In all these types
of decorrelation problems, therefore, thorough analysis typically
pointed out severe limitations rather than offering practical
solutions.

This unsatisfactory situation changed dramatically with methods like
frequency-resolved optical gating (FROG \cite{T00}) and the sonogram
technique \cite{R99}. In simple words, the idea of these method is
to replace the separate acquisition of a spectrum and an
autocorrelation by measurement of a two-dimensional spectrogram-like
function $\FS{S}(\tau,\omega)$ with simultaneous functional dependence on
delay and frequency. In the simplest case of second-harmonic FROG,
\begin{equation}
\FS{S}(\tau,\omega) \propto \int\limits_{-\infty}^{\infty} \RS{I}(t) \RS{I}(t-\tau)
\exp\left(-i \omega t \right) {\rm d}t . \label{eq:FROG}
\end{equation}
In the experiment, the acquisition of FROG traces only requires
replacing the spectrally integrating detector by a spectrally
resolving one and can be used with all existing autocorrelation
geometries. It can be shown \cite{T00} that FROG resolves most but
not all issues of decorrelation. There is a one-to-one
correspondence between a given FROG trace $\FS{S}(\tau,\omega)$ and the
intensity envelope $\RS{I}(t)$, provided that the pulse does not consist
of temporally or spectrally separated segments with intensity levels
approaching the experimental noise floor in between. A detailed
discussion of remaining ambiguities can be found in
\cite{KTO03,SST04}. One additional ambiguity of a second-order
autocorrelation-type FROG method is time reversal, i.e., the method
cannot distinguish between $\RS{I}(t)$ and $\RS{I}(-t)$. Nevertheless, there are
alternative methods relying on third-order nonlinearities \cite{T00}
and other fixes \cite{ZSK02} to overcome this issue. Retrieving
$\RS{I}(t)$ from a known $\FS{S}(\tau,\omega)$ is yet again an inverse problem, and considerable effort went into
reliable and numerically efficient algorithms for retrieving the
pulse shape from a FROG trace
\cite{DFR94,DFT98,T00,ACL05,LT08,SS09}.

As an alternative to FROG, spectral phase interferometry for direct
electric-field reconstruction (SPIDER) emerged \cite{IW98,SS06a}.
This method is conceptually different from all previous approaches
and normally does not require any kind of deconvolution. Being based
on spectral interferometry \cite{RSB89,FBS96}, SPIDER directly
measures the spectral phase $\FS{\varphi}(\omega)$ of the
electric field $\FS{E}(\omega)=\FS{A}(\omega) \exp\left[i
\FS{\varphi}(\omega)\right]$, where both the spectrum $\FS{A}(\omega)$
and the phase $\FS{\varphi}(\omega)$ are real-valued functions.
In the time domain, the field can again be separated
into amplitude and phase via $\RS{E}(t)=\RS{A}(t)\exp\left[i \phi(t)\right]$. The
electric field relates to the intensity via $\RS{I}(t)=\vert \RS{A}(t)\vert^2$.
SPIDER requires the generation of two
replicas of the input pulse $\RS{I}(t)$, which are processed with an
ancillary pulse $\RS{I}_{\rm a}(t)$ in a second-order nonlinear crystal. This
ancillary pulse can also be derived from the same input field, i.e.,
the method is self-referenced. To this end, a fraction of the input
pulse is sent through a dispersive medium, e.g., a glass block. The
dispersion of the glass block gives rise to a non-vanishing chirp
$\zeta(t)={\rm d}\phi(t)/{\rm d}t$, i.e., the pulse's carrier
frequency varies with time. Mixing the field of the ancillary pulse
$\RS{E}_{\rm a}$ with the two replicas contained in $\RS{E}_{\rm r}$ causes a
different frequency shift for each replica, which is referred to as
spectral shear. The spectrum of the nonlinear mixing product, i.e.,
the Fourier transform of $\RS{E}_{\rm r}(t) \RS{E}_{\rm a}(t)$ can then be
analyzed with the methods described in \cite{TIK82,BS06} to
reconstruct the spectral phase $\FS{\varphi}(\omega)$ of the
original input pulse. With additional knowledge of the amplitude
$\FS{A}(\omega)$, a simple Fourier transform then suffices for
complete reconstruction of $\RS{I}(t)$ in the time-domain.

While traditional SPIDER methods all rely on second-order
nonlinearities, it seems appealing to extend this proven method to
third-order nonlinearities as they are frequently used for
autocorrelation and FROG. These higher-order methods allow generating a
FROG or SPIDER signal that is spectrally collocated with the
generating wave, which alleviates pulse characterization in the
ultraviolet \cite{KBB08}. Moreover, third-order SPIDER has recently
been demonstrated using a monolithic waveguide device \cite{PPP11},
bringing us closer to the dream of an all-optical oscilloscope on a
single optoelectronic chip. As third-order optical nonlinearities
appear in any kind of dielectric material and not only in
non-isotropic crystalline media, their use is essential for
optically integrated characterization methods. Unfortunately FROG, autocorrelation
and also a SD-based spectral-interference pulse characterization technique~\cite{Liu12}
involve mechanical scanning of optical
delays, which rules them out for integrated optical devices. At the
current state of the art, this only leaves SPIDER for such
applications.

A third-order SPIDER, however, turns out to be difficult to
implement \cite{KBB08}. Given that there is now the product
formation of three waves involved, which interact in the nonlinear
mixing process, one can either generate two waves from the ancillary
pulse and one from the two replicas, or, vice versa, two replica
waves $\RS{E}_{\rm r}(t)$ and only one ancilla $\RS{E}_{\rm a}$. As the
chirped and temporally stretched ancilla typically displays a much
lower peak intensity than the two replicas, the latter approach with
two replica waves and one ancilla therefore constitutes the more
favorable physical scenario, resulting in higher conversion
efficiencies. Using, for example, a SD geometry
\cite{VKO79}, the nonlinear mixing process forms a signal
$\RS{E}_{\text{SD}}(t) \propto \RS{E}^2_{\rm r}(t) \RS{E}^\ast_{\rm a}(t)$. This duplicates
the scenario of second-order SPIDER with two notable differences.
The replicas enter squared $\RS{E}^2_{\rm r}(t)$, and the ancilla enters
complex conjugated. While the latter essentially only corresponds to
a sign reversal of $\zeta(t)$, the squared replicas cause SPIDER to
measure the spectral phase $\varphi_{\rm conv}(\omega)$ of the convolved signal $\RS{E}_{\rm conv}(t) \propto \RS{E}^2_{\rm r}(t)$ rather than the spectral phase of the pulse itself. Hence, a deconvolution is required in order to gain insight about the shape of the electric field $\RS{E}(t)$ or of the intensity envelope $\RS{I}(t)$.

\subsection{The deconvolution problem in SD SPIDER}

The derivation of the deconvolution task formulated in Eq.~(2) -- (4) requires a more detailed analysis of the generation process of the SD wave. In the last paragraph, a simplifying time-domain-based formulation of the nonlinear signals generated for pulse characterization has been employed in order to work out the basic differences between the different characterization methods. A more general formulation is obtained in the Fourier domain and by the use of the nonlinear wave equation for the involved electro-magnetic waves ~\cite{Shen2003}. From this approach  one can derive under the slowly varying envelope approximation  that the strength of the SD signal is given by

\begin{equation} \label{eq:spiderintL}
\FS{E}_{\text{SD}}(L,\omega) = i \frac{\mu_0 c \omega}{2 n} \int_0^L \FS{P}_{\rm nl}(\omega) e^{-i \kappa_{\text{SD}}(\omega) z}\, d z \,,
\end{equation}
where $\FS{P}_{\rm nl}(z, \omega)$ is the nonlinear polarization of the bound electrons inside the materials employed for generation of the SD signal. The vacuum permeability $\mu_0$, the speed of light $c$ and the linear refractive index $n=n(\omega)$ are physical constants. $z$ denotes propagation direction of the electro-magnetic waves inside the material that is used to generate the SPIDER signal and $L$ denotes the material's length. $\omega$ is the frequency of the self-diffraction signal and $\kappa_{\text{SD}}(\omega)$ is the wave number of this signal. The constants in Eq.~(\ref{eq:spiderintL}) and the integral over the interaction length will contribute to the kernel $k$ of the autoconvolution problem Eq.~(\ref{eq:auto2}). For further analysis the nonlinear polarization $\FS{P}_{\rm nl}(\omega)$ has to be considered, which is in general given by
\begin{equation} \label{eq:spiderfreq}
    \FS{P}_{\rm nl}(\omega) = \int_0^\infty \int_0^\infty \chi^{(3)}(\omega, -\omega_{\rm a}, \omega_1, \omega_2)
    \FS{E}_{\rm a}^\ast \FS{E}_{\rm r}(\omega_1) \FS{E}_{\rm r}(\omega_2) \delta(\omega + \omega_{\rm a} - \omega_1 - \omega_2) \, d\omega_2 d\omega_1 \,.
\end{equation}
Here, $\omega_1$, $\omega_2$ are the frequency components of the two replica waves $\FS{E}_{\rm r}$, and $\omega_{\rm a}$ denotes the single frequency component of the ancilla $\FS{E}_{\rm a}$. The values of $\chi^{(3)}$ and $\delta(\omega + \omega_{\rm a} - \omega_1 - \omega_2)$ express physical constraints and relations of the light-matter-interaction:
$\chi^{(3)}$, the third-order nonlinear susceptibility, denotes the interaction strength between the field of light and the material in use. The factor \linebreak $\delta(\omega + \omega_{\rm a} - \omega_1 - \omega_2)$ ensures energy conservation in the expression above. Inserting of Eq.~(\ref{eq:spiderfreq}) in Eq.~(\ref{eq:spiderintL}) and simplification of the result yields
\begin{equation} \label{eq:spiderfreqsimple}
\begin{split}
    \FS{E}_{\text{SD}}(\omega) = &  \int_0^{\omega + \omega_{\rm a}} \chi^{(3)}(\omega, -\omega_{\rm a}, \omega_1, \omega+\omega_{\rm a}-\omega_1)  \, M (\omega,-\omega_{\rm a}, \omega_1,\omega_2) \\
  & \phantom{\int_0^{\omega + \omega_{\rm a}}}\, \FS{E}_{\rm a}^\ast \FS{E}_{\rm r}(\omega_1) \FS{E}_{\rm r}(\omega + \omega_{\rm a} - \omega_1) \, d\omega_1 \, .
\end{split}
\end{equation}
Here, $M$ collects the constants
and the result of the $z$-integration of the exponential term in Eq.~(\ref{eq:spiderintL}) describing the phase matching conditions between all waves involved in this four-wave-mixing process:
The frequency dependent values of $\chi^{(3)}$  and $M$ are known from experiments and merge to the kernel
\begin{equation}\label{eq:spiderkernel}
\FS{K}(\omega,\omega_1,\omega_{\rm a})=\frac{\mu_0 c \omega}{2 n}\chi^{(3)}(\omega, -\omega_{\rm a}, \omega_1, \omega+\omega_{\rm a}-\omega_1)\FS{E}_{\rm a}^\ast e^{i \Delta\kappa\cdot(\xi,\eta,\frac{L}{2})^T}\text{sinc}(\Delta\kappa_z L/2)
\end{equation}
where $\Delta\kappa=(\Delta\kappa_\xi, \Delta\kappa_\eta,\Delta\kappa_z)(\omega, -\omega_{\rm a}, \omega_1, \omega+\omega_{\rm a}-\omega_1)$ is the phase-mismatch vector and $(\xi,\eta,z)$ denotes spatial coordinates.
Now (\ref{eq:spiderfreqsimple}) can be written as
\begin{equation}\label{eq:spiderfreqkernel}
\FS{E}_{\text{SD}}(\omega) = \int_0^{\omega + \omega_{\rm a}} \FS{K}(\omega,\omega_1,\omega_{\rm a})\FS{E}_{\rm r}(\omega_1) \FS{E}_{\rm r}(\omega + \omega_{\rm a} - \omega_1)\, d\omega_1.
\end{equation}
We note that the wave $\FS{E}_{\rm r}$ has a spectrum of finite width, i.e., there exist frequencies $0<\omega_1^{\textup{min}}<\omega_1^{\textup{max}}<\infty$ such that $\FS{E}_{\rm r}(\omega_1)=0$ for $\omega_1\notin[\omega_1^{\textup{min}},\omega_1^{\textup{max}}]$. Consequently $\FS{E}_{\text{SD}}$ is also compactly supported and $\FS{E}_{\text{SD}}(\omega)=0$ for $\omega\notin[2\omega_1^{\textup{min}}-\omega_{\rm a},2\omega_1^{\textup{max}}-\omega_{\rm a}]$. Substituting
\begin{eqnarray}
\omega+\omega_{\rm a}&=s\, \omega_1^{\textup{max}}&\textup{ for } s\in[0,2],\nonumber\\
\omega_1&=q \,\omega_1^{\textup{max}}&\textup{ for } q\in[0,1]\nonumber
\end{eqnarray}
in Eq.~(\ref{eq:spiderfreqkernel}) and defining
\begin{equation} \label{eq:def}
\begin{split}
    k(s,q) :=&  \omega_1^{\textup{max}}\,\FS{K}(s\, \omega_1^{\textup{max}}  -\omega_{\rm a},
    q\,  \omega_1^{\textup{max}},\omega_{\rm a}), \\
    x(q) :=& \FS{E}_{\rm r}(q\, \omega_1^{\textup{max}}), \\
    y(s) :=& \FS{E}_{\text{SD}}(s\, \omega_1^{\textup{max}} -\omega_{\rm a}),
\end{split}
\end{equation}
allows us to trasnform the limits of the integral such that we arrive at the abstract mathematical model equation (\ref{eq:auto2}), which will be studied in the following. By designing the SPIDER apparatus in such a way that one of the replica beams can be blocked, the SD signal's amplitude $B(s)$ and its phase $\psi(s)$ in (\ref{eq:auto4}) can be measured with the same device such that
(noisy) data of the complex function $y(s),\;0 \le s \le 2$, are available.
An independent measurement of the spectral intensity $\FS{I}(\omega)=|A(\omega)|^{2}$ in front of the SPIDER apparatus allows us to compute
 approximations $\hat A(q)$ of $\FS{A}(q)$ corresponding to the incident pulse. Since for the retrieval of the pulse shape,  the complex function $ x(q) $, $0 \leq q \leq 1$ has to be Fourier transformed, the completely unknown phase function $\varphi(q) $ or its derivative $\rm{GD}(q)=\varphi'(q)$ called group delay remains to be determined.
For simplicity, in the following we neglect the uncertainty of
kernel data and suppose to know in a precise manner the \emph{continuous} complex function $k(s,q),\;(s,q) \in [0,2]\times [0,1]$. We denote by $$k_{\rm max}:=\max \limits_{(s,q) \in [0,2]\times [0,1]} |k(s,q)|$$ the maximum of its modulus.

\section{The abstract mathematical model and its ill-posedness}

The physical inverse problem under consideration described in Section~\ref{sec:physics} requires
the solution of the generalized autoconvolution equation (\ref{eq:auto2}).
As outlined in Section~\ref{sec:intro}, the two aspects (a) and (b) are under consideration. As is well-known the comprehension of additional information about expected solutions and the retrieval of data plays an important role for
the stable approximate solution of inverse problems. Therefore, physicists  are preferably interested in aspect (b) aimed at finding the continuously differentiable phase function  $\varphi: [0,1] \to \R$
in (\ref{eq:auto3}), given the kernel $k$ and noisy observational data of $y$ and $A$. We note that the additional knowledge of an estimate $\hat A(q), \;0 \le q \le 1,$ of the modulus function $|x(q)|=A(q), \;0 \le q \le 1$, in (\ref{eq:auto3}) will play a prominent role for choosing the regularization parameter in the process of constructing stable approximate solutions to (\ref{eq:opeq}) in Section~\ref{sec:adapreg}.
Nevertheless, it will give some insight into the problem structure to focus  in this and in the subsequent section on aspect (a), where
(\ref{eq:auto2}) is considered  as a nonlinear operator equation
\begin{equation} \label{eq:opeq}
F(x)\,=\,y\,,
\end{equation}
formulated in appropriate abstract function spaces.

Taking into account that the forward operator attains the form
\begin{equation} \label{eq:forward}
[F(x)](s):=\int \limits _0^s k(s,q)\; x(s-q)\,x(q)\,dq,\qquad 0 \le s \le 2,
\end{equation}
we are searching for the function  $x: [0,1] \to \C$ from noisy data $y^\delta \in L^2_{\C}(0,2)$ of $y$ that satisfy the deterministic noise model
\begin{equation} \label{eq:noise}
\|y^\delta-y\|_{L^2_{\C}(0,2)} \le \delta\,,
\end{equation}
with noise level $\delta>0$.
Our focus is on the Hilbert space situation $F:\,L^2_{\C}(0,1) \to L^2_{\C}(0,2),$ where $F$ is mapping
between Hilbert spaces of square-integrable complex functions. For $x:[0,1] \to \C,\;y \in [0,2] \to \C$, and $k: [0,2]\times [0,1] \to \C$ formula (\ref{eq:forward}) is an abbreviation of the
form
$$[F(x)](s):=\begin{cases} \quad \int \limits _0^s \;\;k(s,q)\; x(s-q)\,x(q)\,dq,\qquad 0 \le s \le 1,\\ \quad \int \limits _{s-1}^1 k(s,q)\; x(s-q)\,x(q)\,dq,\qquad 1 < s \le 2,  \end{cases}$$
which we always use for simplicity assuming that $x,y$ can be extended to $\R$ and $k$ to $\R^2$ as zero outside of the original domains.
In this context, we remember the structure of the norm
$\|z\|_{L^2_{\C}(0,a)}=\left(\int_0^a |z(q)|^2 dq\right)^{1/2}$ for elements $z \in  L^2_{\C}(0,a)$. If we consider
square-integrable real functions $z \in L^2(0,a)$, then the norm is the same, but $|z(q)|$ denotes the modulus of the real value $z(q)$.

The ill-posedness phenomena of non-uniqueness and instability well-established for the equation (\ref{eq:auto1}) (cf.~\cite{GH94,FH96}) also occur when the
complex-valued function $x: [0,1] \to \C$ is determined from Eq.~(\ref{eq:auto2}) and also in case that only $\varphi$ has to be found. Together with $x$ also the function $-x$ having the same modulus $|x|$ satisfies (\ref{eq:auto2}). Hence, if a continuously differentiable function $\varphi: [0,1] \to \R$ as a part of $x$ in (\ref{eq:auto3}) can be chosen such that equation (\ref{eq:auto2}) is satisfied for given functions $A$ and $y$, then $\varphi+\pi$ also solves the equation.   On the other hand, together with $\varphi$ also $\varphi+2\pi$ and hence all $\varphi+m \pi,\;m \in \Z,$ solve the equation. This type of non-uniqueness cannot be neglected, but is of inferior significance, since the corresponding group delays $\varphi^\prime$ are uniform for all integers $m$.

Instability is a more important difficulty occurring in all linear and nonlinear infinite dimensional least-squares problems which are aimed at solving inverse problems with smoothing forward operators.
Therefore, extreme care must be exercised when discretizing an infinite dimensional least-squares problem since the finite-dimensional approximating least-squares solutions may not converge (cf., e.g., \cite{Sei80})
or, even worse, they may diverge from the true solution with arbitrarily high speed (cf., e.g., \cite{ST06}). For the nonlinear operator $F$ from (\ref{eq:forward}) we unfortunately have that for every $x_0 \in  L^2_{\C}(0,1) $  there exist, in case of arbitrarily small radii $r>0$, sequences $\{x_n\}_{n=1}^\infty \subset  L^2_{\C}(0,1)\cap B_r(x_0)$ in a neighborhood  $B_r(x_0):=\{x \in L^2_{\C}(0,1):\,\|x-x_0\|_{L^2_{\C}(0,1)}<r\}$  of $x_0$  with
\begin{equation} \label{eq:localill}
\|F(x_n)-F(x_0)\|_{L^2_{\C}(0,2)} \to 0, \quad \mbox {but} \quad \|x_n-x_0\|_{L^2_{\C}(0,1)} \not \to 0  \quad \mbox{as} \quad n \to \infty.
\end{equation}
We call this local ill-posedness at the point $x_0$ and it has the consequence that a solution of equation (\ref{eq:auto2})
cannot be approximated arbitrarily good even if the noise level of the data tends to zero (cf.~\cite{HS94,HS98,HO00}). For injective operators $F$ local ill-posedness indicates that $F$ is not continuously
invertible, and this mostly results from compactness of the forward operator $F$. However, in \cite{GH94} it was shown that the autoconvolution operator from equation (\ref{eq:auto1}) is locally ill-posed everywhere
 in $L^2(0,1)$, but fails to be compact. So compactness also cannot be expected as an intrinsic property of $F$ in the complex case, but following an idea from \cite{FH96}, which was extended in \cite{Gerth11}, we can
nevertheless prove local ill-posedness (\ref{eq:localill}) of $F$ from (\ref{eq:auto2}) everywhere in $L^2_{\C}(0,1)$ by Example~\ref{ex:localill}.

\begin{example} \label{ex:localill}
For any radius $r>0$ the positive function $\Psi_\beta(q):=\frac{r\sqrt{1-2\beta}}{q^\beta},$ \linebreak $0 < q \le 1$, possesses  for all $0<\beta<\frac{1}{2}$ the properties $$\Psi_\beta \in L^2(0,1) \subset L^2_{\C}(0,1)\quad \mbox{and}\quad \|\Psi_\beta\|_{L^2_{\C}(0,1)}=r.$$ Then the equalities
$$[\Psi_\beta * \Psi_\beta](s)=r^2(1-2\beta)s^{1-2\beta}\int_0^1 (1-u)^{-\beta}u^{-\beta}du=r^2(1-2\beta)s^{1-2\beta}B(1-\beta,1-\beta)$$
with Euler's beta function $B(\cdot,\cdot)$, satisfying for $0<\beta<1/2$ the condition \linebreak $0<B(1-\beta,1-\beta) < \pi$,  yield the estimate
$$\|\Psi_\beta * \Psi_\beta\|_{L^2_{\C}(0,2)} \le \sqrt{2} \max \limits_{s \in [0,2]} [\Psi_\beta * \Psi_\beta](s) \le \sqrt{2}r^2 (1-2\beta)\pi 2^{1-2\beta} \to 0 \;\mbox{as} \; \beta \to \frac{1}{2}.$$
By setting $x_n:=x_0+ \Psi_{\frac{1}{2}-\frac{1}{n}}$ we have  for
every $x_0 \in L^2_{\C}(0,1)$ and all $0 \le s \le 2$
\begin{eqnarray*}
\begin{split}
|[F(x_n)](s)-[F(x_0)](s)| \le  \left|\int \limits_0^s
(k(s,q)+k(s,s-q))x_0(s-q) \Psi_{\frac{1}{2}-\frac{1}{n}}(q)dq \right| \\ +
\left|\int \limits_0^s k(s,q) \Psi_{\frac{1}{2}-\frac{1}{n}}(s-q)
\Psi_{\frac{1}{2}-\frac{1}{n}}(q) dq \right| \\\le
k_{\textup{max}}\{2\int \limits_0^s |x_0(s-q)| \Psi_{\frac{1}{2}-\frac{1}{n}}(q)dq+
[\Psi_{\frac{1}{2}-\frac{1}{n}}*\Psi_{\frac{1}{2}-\frac{1}{n}}](s)\} \to 0 \quad \mbox{as} \quad n \to \infty,
\end{split}
\end{eqnarray*}
because $\Psi_{\frac{1}{2}-\frac{1}{n}}$ is weakly convergent in $L^2(0,1)$ to the zero function as $n \to \infty$.
Since moreover $|[F(x_n)](s)-[F(x_0)](s)|$ is bounded from above by a constant that does not depend on $s$, Lebesgue's
dominated convergence theorem implies that $\|F(x_n)-F(x_0)\|_{L^2_{\C}(0,2)} \to 0$ as $n \to \infty$ and hence, with $\|x_n-x_0\|_{L^2_{\C}(0,1)}=r>0$, (\ref{eq:localill}) is valid. Consequently, we have local ill-posedness for all functions  $x_0 \in L^2_{\C}(0,1)$.
\end{example}

Note that with respect to $x(q)\,=\,A(q)\,\exp[i \varphi(q)]$ the local ill-posedness mentioned in Example~\ref{ex:localill} refers to exploding amplitudes $A(q)$ for small $q$ as a consequence of
the fact that $\Psi_{\beta}(q)$ has a weak pole at $q=0$. If $A(q)$ is given as a continuous function, then the ill-posedness with respect to the remaining unknown continuous phase function $\varphi$
is less obvious, but case studies show that instability also occurs in the sense that clearly distinguished phase functions $\varphi$ can lead to nearly the same complex-valued function $[F(x)](s)=B(s)\,\exp[i \psi(s)]$. Such studies also prove that $\varphi$ is not identifiable alone from the phase function $\psi$. Both functions, amplitude $B$ and phase $\psi$, are required to recover $\varphi$ when $A$ is known.

\section{Titchmarsh convolution theorem and its consequence}

For any function $x \in L^1_{\C}(0,1)$ the elements $F(x)$ and $F(-x)$ according to the operator $F$ from (\ref{eq:forward}) coincide and belong to $L^1_{\C}(0,2)$.
However, it is of interest whether this is the only ambiguity of solutions to equation (\ref{eq:auto2}). For the special case $k \equiv 1$ a positive answer can be given by the Titchmarsh convolution theorem (cf.~\cite{Tit26}) which we formulate as a lemma:

\begin{lemma} \label{lem:tit}
Let $f,g \in L^1_{\C}(\R)$ with ${\rm supp}(f) \subset [0,\infty),\; {\rm supp}(g) \subset [0,\infty)$, and let for some constant $a>0$
$$[f*g](s):=\int \limits_0^s  f(s-q)\, g(q)\, dq \,=\,0 \qquad \mbox{for almost all}\quad s \in [0,a]\,.  $$
Then there are non-negative constants $a_1$ and $a_2$ such that $a_1+a_2 \ge a$ and
$$f(q)=0\quad \mbox{for almost all}\quad t \in [0,a_1], \quad g(q)=0\quad \mbox{for almost all}\quad q \in [0,a_2].$$
\end{lemma}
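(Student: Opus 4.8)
The plan is to reduce the statement to the additivity of the left endpoints of the supports under convolution, and then to establish that additivity by passing to Fourier--Laplace transforms.

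\textbf{Reduction.} First I would observe that for $s\in[0,a]$ the value $[f*g](s)=\int_0^s f(s-q)\,g(q)\,dq$ depends only on the restrictions of $f$ and $g$ to $[0,a]$, so truncating both functions to $[0,a]$ changes neither the hypothesis nor the desired conclusion; I may thus assume $f,g\in L^1_\C(\R)$ are compactly supported in $[0,a]$. Next, set $\alpha:=\inf({\rm supp}\,f)$ and $\beta:=\inf({\rm supp}\,g)$, the essential infima of the two supports, both $\ge 0$ by hypothesis. By the very definition of $\alpha,\beta$ we have $f=0$ a.e.\ on $[0,\alpha]$ and $g=0$ a.e.\ on $[0,\beta]$, so it suffices to prove $\alpha+\beta\ge a$; then $a_1:=\alpha$ and $a_2:=\beta$ satisfy the assertion.

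\textbf{Locating the crux.} The elementary half is the inequality $\inf({\rm supp}(f*g))\ge\alpha+\beta$: for $s<\alpha+\beta$ the integrand $f(s-q)\,g(q)$ vanishes a.e., since $g(q)\ne 0$ forces $q\ge\beta$ while $f(s-q)\ne 0$ forces $q\le s-\alpha<\beta$. The genuine content is the reverse inequality, namely that $f*g$ does \emph{not} vanish on any interval $[\alpha+\beta,\alpha+\beta+\varepsilon]$; granting this, the hypothesis $f*g=0$ on $[0,a]$ forces $a\le\alpha+\beta$, as required. So the entire difficulty is the support-additivity identity $\inf({\rm supp}(f*g))=\alpha+\beta$, i.e.\ the absence of total cancellation just above $\alpha+\beta$.

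\textbf{Transform step.} After translating $f$ by $-\alpha$ and $g$ by $-\beta$ I may assume $\alpha=\beta=0$ and must show $f*g\not\equiv 0$ in every right neighbourhood of $0$. Since $f,g$ are now compactly supported $L^1$ functions, their Fourier--Laplace transforms $\Phi(\zeta)=\int f(t)e^{-i\zeta t}\,dt$ and $\Gamma(\zeta)$ are entire of exponential type, and the convolution theorem gives $\widehat{f*g}=\Phi\,\Gamma$. By the Paley--Wiener dictionary the left endpoint of the support of such a function is read off from the growth of its transform in the direction $\theta=-\pi/2$; indeed $\Phi(-ir)=\int f(t)e^{-rt}\,dt$ decays like $e^{-r\alpha}$, so the indicator of $\Phi$ at $\theta=-\pi/2$ equals $-\alpha$, and likewise for $\Gamma$ and for $\Phi\Gamma$. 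The claim $\inf({\rm supp}(f*g))=0$ thus becomes the additivity of these growth indicators for the product $\Phi\Gamma$ in that single direction.

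\textbf{Main obstacle.} This additivity is the hard point. Pointwise, $\log|\Phi\Gamma|=\log|\Phi|+\log|\Gamma|$ yields only the trivial subadditive inequality for the $\limsup$ defining the indicator, and for arbitrary entire functions of exponential type equality can fail, a product decaying strictly faster than either factor. To force equality I would exploit that $\Phi,\Gamma$ are genuine transforms of functions supported in $[0,\infty)$, which via a Phragm\'en--Lindel\"of argument in the lower half-plane (equivalently, the theory of completely regular growth of entire functions of exponential type) excludes such extra cancellation; a convenient device is to first convolve $f,g$ with a nonnegative bump supported in $[0,\varepsilon]$, which shifts the support infima by at most $\varepsilon$ and is removed by letting $\varepsilon\to 0$. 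This no-cancellation step is the heart of Titchmarsh's theorem and the only place requiring substantial complex analysis; the reductions above and the support--growth dictionary are routine.
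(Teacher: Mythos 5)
The paper does not prove this lemma at all: it is the classical Titchmarsh convolution theorem, stated as Lemma~\ref{lem:tit} and attributed to \cite{Tit26}, so there is no internal proof to compare against. Judged on its own, your proposal carries out the easy parts correctly --- the truncation to $[0,a]$, the reduction to showing $\alpha+\beta\ge a$ for $\alpha=\inf({\rm supp}\,f)$, $\beta=\inf({\rm supp}\,g)$, the elementary inequality $\inf({\rm supp}(f*g))\ge\alpha+\beta$, and the translation to the normalized case $\alpha=\beta=0$ --- and it correctly identifies where the difficulty lives. But it stops exactly there. The step you label the ``main obstacle,'' namely that the indicator of $\Phi\Gamma$ at $\theta=-\pi/2$ equals the sum of the indicators of $\Phi$ and $\Gamma$ (equivalently, that no total cancellation occurs just above $\alpha+\beta$), \emph{is} the Titchmarsh theorem; everything preceding it is routine bookkeeping. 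You state yourself that for general entire functions of exponential type this additivity can fail, and you then only name the machinery (Phragm\'en--Lindel\"of, completely regular growth in the sense of Levin--Pfluger) that would rule out the failure, without verifying that $\Phi$ and $\Gamma$ satisfy its hypotheses or running the argument. As written, the proposal reduces the theorem to itself plus a pointer to the literature.

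Two smaller issues compound this. First, the claim that the indicator of $\Phi$ at $\theta=-\pi/2$ \emph{equals} $-\alpha$ (not merely $\le-\alpha$) is itself a nontrivial direction of the Paley--Wiener/P\'olya correspondence and is asserted without argument; the trivial estimate $|\Phi(-ir)|\le e^{-r\alpha}\|f\|_{L^1}$ gives only one inequality. Second, the mollification device (convolving $f,g$ with a bump supported in $[0,\varepsilon]$) is mentioned but its role is never explained: it is not clear what cancellation it is supposed to prevent, nor how the limit $\varepsilon\to0$ recovers the statement for the original $f,g$. To make this a proof you would need either to execute the complex-analytic core (e.g.\ via the Levin--Pfluger theory, or via the standard elementary proof of Titchmarsh's theorem using iterated convolutions and the Weierstrass/Bernstein polynomial estimates), or to do what the paper does and cite the result explicitly as known.
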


This lemma allows us to prove the following theorem:

\begin{theorem} \label{th:tit}
If for given $y \in L^2_{\C}(0,2)$ the function $x \in L^2_{\C}(0,1)$ solves the equation~(\ref{eq:auto2}) with $k \equiv 1$,
then $x$ and $-x$ are the only solutions of this equation.
\end{theorem}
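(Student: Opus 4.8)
The plan is to show that any two solutions of (\ref{eq:auto2}) with $k\equiv 1$ can differ only by a global sign, which is exactly the assertion that $x$ and $-x$ exhaust the solution set. So let $x_1,x_2\in L^2_{\C}(0,1)$ be two solutions, each extended by zero to $\R$, so that $x_1*x_1=y=x_2*x_2$. The key algebraic observation is that convolution on $L^1_{\C}(\R)$ is bilinear and commutative, so the difference of the two autoconvolutions factors:
$$0 \;=\; x_1*x_1 - x_2*x_2 \;=\; (x_1-x_2)*(x_1+x_2),$$
the cross terms $x_1*x_2$ and $x_2*x_1$ cancelling by commutativity. Setting $f:=x_1-x_2$ and $g:=x_1+x_2$, the whole problem is thereby reduced to extracting information from the single vanishing convolution $f*g=0$.

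Before invoking Lemma~\ref{lem:tit} I would verify its hypotheses. Since $(0,1)$ has finite measure, the Cauchy--Schwarz inequality gives $L^2_{\C}(0,1)\subset L^1_{\C}(0,1)$, so $f,g\in L^1_{\C}(\R)$ with ${\rm supp}(f),{\rm supp}(g)\subset[0,1]\subset[0,\infty)$. Because the autoconvolution of a function supported in $[0,1]$ is supported in $[0,2]$, the identity $x_1*x_1=x_2*x_2$ — which holds on $[0,2]$ by assumption and trivially outside it, both sides being zero — yields $f*g=0$ for almost all $s\in[0,2]$. This is precisely the premise of the Titchmarsh convolution theorem with the choice $a=2$. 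It is worth stressing that the data are prescribed on the full interval $[0,2]$ rather than on a proper subinterval; this is exactly what allows $a$ to be taken as large as $2$, and the argument below would break down for any smaller $a$.

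Applying Lemma~\ref{lem:tit} then produces non-negative constants $a_1,a_2$ with $a_1+a_2\ge 2$ such that $f=0$ almost everywhere on $[0,a_1]$ and $g=0$ almost everywhere on $[0,a_2]$. The decisive final step is a short case analysis against the support constraint: since $f$ and $g$ already vanish outside $[0,1]$, if $a_1\ge 1$ then $f\equiv 0$, i.e. $x_2=x_1$, whereas if $a_2\ge 1$ then $g\equiv 0$, i.e. $x_2=-x_1$. The inequality $a_1+a_2\ge 2$ rules out the only remaining possibility $a_1<1$ and $a_2<1$, so one of these two conclusions must hold, giving $x_2=\pm x_1$ as claimed. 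I expect the main (and rather modest) obstacle to be the careful bookkeeping of the support and integrability conditions needed for Lemma~\ref{lem:tit} to apply cleanly; once $f*g=0$ on all of $[0,2]$ is secured, the combinatorial observation $a_1+a_2\ge 2$ together with the supports lying in $[0,1]$ completes the argument essentially for free.
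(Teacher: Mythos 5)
Your proposal is correct and follows essentially the same route as the paper: the factorization $x_1*x_1-x_2*x_2=(x_1-x_2)*(x_1+x_2)$ is exactly the paper's identity $[(x+\Delta)*(x+\Delta)-x*x]=\Delta*(2x+\Delta)$ with $\Delta=x_2-x_1$, and both arguments then invoke Lemma~\ref{lem:tit} with $a=2$ together with the support constraint ${\rm supp}\subset[0,1]$ to force one factor to vanish identically. Your version is slightly more explicit about verifying the $L^1$ and support hypotheses and phrases the conclusion as a symmetric case analysis on $a_1,a_2$, but this is only a cosmetic difference from the paper's proof.
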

\begin{proof}
Let $x \in L^2_{\C}(0,1)$ and $x+\Delta \in L^2_{\C}(0,1)$ solve for all $0 \le s \le 2$ the equation~(\ref{eq:auto1}).
Then we have $[(x+\Delta)*(x+\Delta)-x*x](s)=[\Delta*(2x+\Delta)](s)=0$ for almost all $s \in [0,2]$. If we set
$$q_\Delta:=\sup\{q \ge 0:\; \Delta(\tau)=0 \;\;\mbox{for almost all}\;\; \tau \in [0,q]\},$$ then $x$ and $x+\Delta$
are different elements of $L^2_{\C}(0,1)$ if and only if $a_1:=q_\Delta<1$. For that case, Lemma~\ref{lem:tit} ensures with $a_2>1$ that
$[2x+\Delta](q)=0$ and $[x+\Delta](q)=[-x](q)$ for almost all $q \in [0,1]$. Thus $-x$ is the second solution besides $x$ and other
solutions can be excluded by this proof.
\end{proof}

It seems to be an open problem under what conditions imposed on $k \not \equiv 1$ the result of Theorem~\ref{th:tit} of having just a twofold solution of (\ref{eq:auto2}) can be formulated and proven.

\section{An adapted regularization approach} \label{sec:adapreg}

Now we return to the SD-SPIDER-motivated aspect (b), where additional data $\hat A$ are available.  To find stable approximate solutions for the operator equation (\ref{eq:opeq}) with the nonlinear operator $F:\,L^2_{\C}(0,1) \to L^2_{\C}(0,2)$ from (\ref{eq:forward}), we can exploit the nonlinear \emph{Tikhonov regularization}
(see, e.g.,~\cite[Chapter~10]{EHN96}), where the regularized solutions $x_\alpha^\delta \in L^2_{\C}(0,1)$ are minimizers of
\begin{equation}\label{eq:Tik}
\|F(x)-y^\delta\|^2_{L^2_{\C}(0,2)}+ \alpha \Omega(x) \to \min, \quad \mbox{subject to} \quad x \in D(\Omega) \subseteq L^2_{\C}(0,1),
\end{equation}
with a regularization parameter $\alpha>0$ and a \emph{stabilizing functional} $\Omega: D(\Omega) \to [0,\infty)$ (cf.~\cite[Chapter~4]{SKHK12}) having the domain $D(\Omega)$.
To prefer smooth solutions $x$ the penalty functional is frequently set as
$$\Omega(x):=\|x-\overline x\|^2_{L^2_{\C}(0,1)} \qquad \mbox{and} \qquad \Omega(x):=\|Lx\|^2_{L^2_{\C}(0,1)},\qquad \mbox{respectively},$$
where $\overline x \in L^2_{\C}(0,1)$ is a reference element and $\|x-\overline x\|^2_{L^2_{\C}(0,1)}$ attains small values if $x$ is close to $\overline x$, and on the other hand $L: D(L) \subseteq L^2_{\C}(0,1) \to L^2_{\C}(0,1)$ denotes a densely defined differential operator such that $\|Lx\|^2$ attains small values if for example first or second  derivatives of $x$ are `small'. For the calculations we chose the second derivative $Lx=\frac{\partial^2}{\partial t^2}x$.
The specific measurement situation of our inverse problem in ultrashort laser pulse characterization, where only the phase $\varphi$ in $x$ is to be determined, whereas an observation $\hat A=|\hat x|$ of the amplitude can be observed,
allows us define the \emph{problem specific rule for choosing the regularization parameter}, which in continuous formulation reads as
\begin{equation} \label{eq:rule}
\alpha_*=\alpha_*(y ^\delta,\hat A): \quad \int \limits _0^1 \left||x^\delta_{\alpha_*}(q)|- \hat A(q)\right|^2 dq \le \int \limits _0^1 \left||x^\delta_\alpha(q)|-\hat A(q)\right|^2 dq \quad \mbox{for all} \;\; \alpha>0,
\end{equation}
and to use $x^\delta_{\alpha_*}$ as the adapted approximate solution to Eq.~(\ref{eq:opeq}). Owing to that additional data information $\hat A$ the common search of $\alpha>0$ based on \emph{heuristic rules} like
the quasi-optimality rule, for example successfully applied in \cite{PHNOR01}, can be completely avoided.
Although analytic properties of (\ref{eq:rule}), especially existence of a minimizer $\alpha_*$, are hard to verify for a search over all $\alpha>0$, the situation simplifies in the practically relevant case of a set of discrete values $\{\alpha_1,\alpha_2,\dots,\alpha_N\}$ with $n\in\N$ fixed. A minimizer of (\ref{eq:rule}) over only the $\alpha_j$, $j=1,\dots,N,$ exists. In the unlikely case that $\alpha_*$ is not unique, we may pick one of the minimizers.
Since the total amount of computational work for obtaining  $x^\delta_{\alpha_*}$ from (\ref{eq:Tik}) and (\ref{eq:rule}) is rather high, \emph{iterative regularization procedures} (cf.~\cite{KNS08,BKS11} for an overview) can yield alternatives to the Tikhonov regularization with reduced computational expenses. Our focus is on a variant
of the \emph{Levenberg-Marquardt method}, which is a Newton-type method for nonlinear least-squares problems. For the mathematical theory of this method see the recent paper \cite{Hanke10}. Here, we consider
the iteration process
\begin{equation} \label{eq:LMit}
x^\delta_{(l+1)}:=x^\delta_{(l)}+\gamma\left(F^\prime(x^\delta_{(l)})^*F^\prime(x^\delta_{(l)})+\alpha L^*L \right)^{-1}F^\prime(x^\delta_{(l)})^*(y ^\delta-F(x^\delta_{(l)})),
\end{equation}
with  appropriate relaxation factors $\gamma>0$ and a regularization parameter $\alpha>0$, aimed at minimizing the linearized functional $$\|y^\delta-F(x_{(l)})-F^\prime(x_{(l)})(x-x_{(l)})\|^2_{L^2_{\C}(0,2)}+\alpha\|L(x-x_{(l)})\|^2_{L^2_{\C}(0,2)}$$ and terminated for $l=l_*$
early enough according to some stopping rule. The last iterate $x_\alpha^\delta:= x^\delta_{(l_*)}$ acts as regularized solution. In contrast to the classical version of the Levenberg-Marquardt method we do not diminish the regularization parameter $\alpha$ with growing $l \in \N$, but keep it constant. Under all such regularized solutions we select  $x^\delta_{\alpha_*}$ by the parameter choice rule (\ref{eq:rule}) and use it as approximate
solution to Eq.~(\ref{eq:opeq}). Numerical experiments proved for our problem that the squared deviations of absolute values occurring in (\ref{eq:rule}) are also helpful for any fixed $\alpha>0$ to define the index $l_*$ for stopping
the iteration process. Precisely, we always observed that the values of discretized versions of the integral $\int \limits _0^1 \left||x^\delta_{(l)}(q)|- \hat A(q)\right|^2 dq$ decrease with growing $l=1,2,...$ in the initial part of the iteration up to some turn around point with iteration number $l_*$ after which the integrals tend to increase with growing $l$. As the studies show, this turn around point is frequently connected with appropriate phase functions. This essentially motivates the stopping rule. A numerical example is given in Table ~1.

We still mention that for continuous kernels $k$ the bounded linear operators of the form $F^\prime(x_0):L^2_{\C}(0,1) \to L^2_{\C}(0,2)$ in formula (\ref{eq:LMit}) denote Fr\'echet derivative of the operator (\ref{eq:forward}) at the point
$x_0 \in L^2_{\C}(0,1)$, which can simply be verified as
$$[F^\prime(x_0) h](s)=\int \limits_0^s (k(s,q)+k(s,s-q))x_0(s-q)h(q)dq, \quad 0 \le s \le 2,\quad h \in L^2_{\C}(0,1).$$

\bigskip

Solving (\ref{eq:forward}) on a computer requires discretization of the problem, which we did as follows. The function values of $x$ are to be reconstructed at $N$ supporting points $q_n$ which are chosen equidistantly in an interval $[q_{\textup{min}},q_{\textup{max}}]$. The discrete signal is denoted by
\begin{align}\underline{x}=(x_n)_{n=1}^N=(x(q_n))_{n=1}^N=(\hat A(q_n)e^{i\varphi(q_n)})_{n=1}^N
 \end{align}
for $n=1\dots N$ and $q_n=q_{\textup{min}}+(n-1)\Delta q$ with $\Delta q=\frac{q_{\textup{max}}-q_{\textup{min}}}{N-1}$.\\
The notation for the output signal is analogous,
\begin{align}
 \underline{y}=(y_m)_{m=1}^{2N-1}=(y(s_m))_{m=1}^{2N-1}=(\hat B(s_m)e^{i\psi(s_m)})_{m=1}^{2N-1}
\end{align}
for $m=1\dots 2N-1$ and $s_m=2q_{\textup{min}}-q_\textup{cw}+(m-1)\Delta q$.\\
Here $q_\textup{cw}$ is the frequency
of the quasi-continuous wave ancilla pulse. The kernel takes the form
\begin{equation}
 \underline{K}=k_{m,n}=K(s_m,q_n,q_\textup{cw}),\hspace{1cm}m=1,2,\dots,2N-1,\hspace{0.2cm}n=1,2,\dots,N,
\end{equation}
with  $K(\cdot,\cdot,\cdot)$ from (\ref{eq:spiderkernel}). The autoconvolution operator $F$ is discretized using the rectangular rule. That way,
\begin{equation}
 y_m=\sum\limits_{j=1}^N k(s_m,q_j,q_\textup{cw})x(q_j)x(s_m+q_\textup{cw}-q_j) \Delta q,\qquad m=1,2,\dots,2N-1.
\end{equation}
Because of the finite support of $x$, $x(s_m+q_\textup{cw}-q_j)=0$ for $s_m+q_\textup{cw}-q_j<q_{\textup{min}}$
and $s_m+q_\textup{cw}-q_j>q_{\textup{max}}$.
The complete operator can be written as a multiplication of a matrix $\underline{F}(\underline{x})\in\C^{2N-1\times N}$
with the vector $\underline{x}$,
\begin{equation}\label{eq:conv_discrete}
\uly=\ulf(\ulx)\ulx
\end{equation}
where
\begin{equation}\label{eq:matrix_discrete}
\underline{F}(\ulx)=\Delta q
\begin{pmatrix}
k_{1,1}x_1 & 0 & \hdots & 0 & 0\\
k_{2,1}x_2 & k_{2,2}x_1 & \hdots & 0 & 0\\
 & \ddots & \ddots & & \vdots \\
k_{N-1,1}x_{N-1} & k_{N-1,2}x_{N-2} & \hdots & k_{N-1,N-1}x_1 & 0\\
k_{N,1}x_N & k_{N,2}x_{N-1} & \hdots & k_{N,N-1}x_2 & k_{N,N}x_1\\
0 & k_{N+1,1}x_N & \hdots & k_{N+1,N-1}x_3 & k_{N+1,N-1}x_2 \\
\vdots & &\ddots &\ddots &  \\
0 & 0 & \hdots & k_{2N-2,N-1}x_N & k_{2N-2,N}x_{N-1}\\
0 & 0 & \hdots &0 & k_{2N-1,N}x_N
\end{pmatrix}.
\end{equation}
Analogously the Fr\'{e}chet derivative $F'(x_0)$ is discretized. The m-th entry $(\underline{F}(\underline{x}_0)h)_m$ then reads as
\begin{equation}
 (\underline{F}'(\underline{x}_0)\underline{h})_m=\sum\limits_{j=0}^N (k(s_m,q_j,q_\textup{cw})+k(s_m,s_m+q_\textup{cw}-q_j,q_\textup{cw}))x_0(s_m+q_\textup{cw}-q_j)h(q_j)\Delta q.
\end{equation}
Discretizing the operator $L=\frac{\partial^2}{\partial q^2}$ as
\begin{equation}
 \underline{L}=\frac{1}{\Delta q^2}\begin{pmatrix}
    2  & -1 & 0 & 0 &\hdots& 0\\
    -1 & 2 & -1 & 0 &\hdots& 0\\
    0 &  \ddots&\ddots & \ddots &  & & \\
   0& &\hdots&  & -1 & 2\\
   \end{pmatrix},
\end{equation}
the iteration rule (\ref{eq:LMit}) reads as
\begin{equation} \label{eq:LMit_discrete}
\ulx^\delta_{(l+1)}:=\ulx^\delta_{(l)}+\gamma\left(\ulf^\prime(\ulx^\delta_{(l)})^*\ulf^\prime(\ulx^\delta_{(l)})+\alpha \ull^*\ull \right)^{-1}\ulf^\prime(\ulx^\delta_{(l)})^*(\uly ^\delta-\ulf(\ulx^\delta_{(l)})).
\end{equation}
As initial value $\ulx^\delta_{(0)}$ we take the measured absolute values and a zero phase.
\bigskip

\section{Case studies}\label{sec:casestudies}
Before turning to examples for the reconstruction of a phase, we want to give an idea about the dependency of the measurements on the unknown phase. An example is given in Figure~2. 
There, two slightly different phases are shown together with their autoconvolution signal split into phase and absolute values. Both convolutions have been performed using the same function for the absolute values $\hat A$ and the physical kernel. Although the resulting phases look quite alike, there is a much more obvious difference in the resulting absolute values $\hat B$. The example indicates that the complete autoconvolution signal is necessary for the reconstruction.
\begin{figure}
\centering
\includegraphics[width=\linewidth]{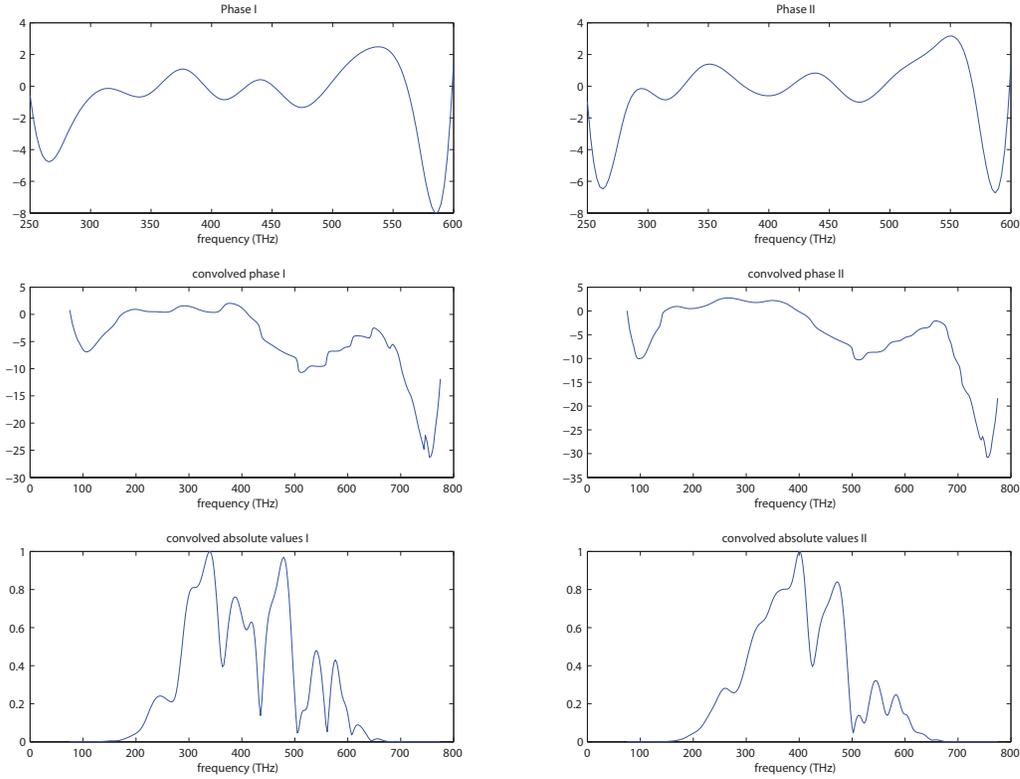}\caption{In the top row, two different, yet similar looking examples of phases for the unknown $x$ are shown. Both are convolved with the same function for the absolute values. In the second row, the phases of the convolved pulse $y$ are plotted, again being quite alike. However, the absolute values of $y$, pictured in the bottom row, show larger distinctions.}
\end{figure}\label{fig:phasenvergleich}
In the following we give two examples for the reconstruction. For a more detailed discussion about the numerical results and effects we refer to \cite{Gerth11}. To test our reconstruction algorithm, we chose functions for the absolute values $\hat A(q)$ and for the phase $\varphi(q)$. Since the real magnitude of $\hat A(q)$ is unknown, we did most of the testing with a maximum value of $\hat A_{\rm max}=10^{-7}$ so that both $x$ and $y$ were neither too large nor too small. Since the kernel adds a magnitude of $10^{28}$ we arrived at approximately $10^{14}$ for the SD-Spider signal. To get reasonable values for the regularization parameter, we rescaled it. With $\alpha$ and $\Delta q$ from the previous section we define $\hat{\alpha}=\alpha\cdot \hat A_{\rm max}^2 \cdot \Delta q^{-4} \cdot (10^{28})^{-2}$. The normalized parameter $\hat{\alpha}$ will be the one given in the evaluation of the experiments. In order to follow the physical background of the problem, huge numbers appear inevitable. Both pulses
were sampled on a fine grid and according to (\ref{eq:conv_discrete}). From this representation we extracted absolute values $\hat B(s)$ and phase $\psi(s)$, which we again sampled on a coarser grid to avoid inverse crime, i.e., an overly good reconstruction that arises as an artifact because forward and backward calculations were either performed on the same grid or because one grid was a multiple of the other. To all data that is assumed to be given as measurements, i.e. the absolute values of both the spectrum of the pulse to be reconstructed $\hat A(q)$ and of SD-Spider signal $\hat B(s)$, as well as the phase of the SD-signal $\psi(s)$, we added to each data point normally distributed random noise with zero mean and a standard deviation of $\delta$ percent relative to the correct value. The final regularization parameter was chosen according to (\ref{eq:rule}), using the noisy data $\hat A$ as reference. This means that we assume the solution to be best if the deviation of the corresponding absolute values is minimal.

\begin{figure}
\centering
\subfigure[Fundamental pulse]{\includegraphics[width=0.9\linewidth]{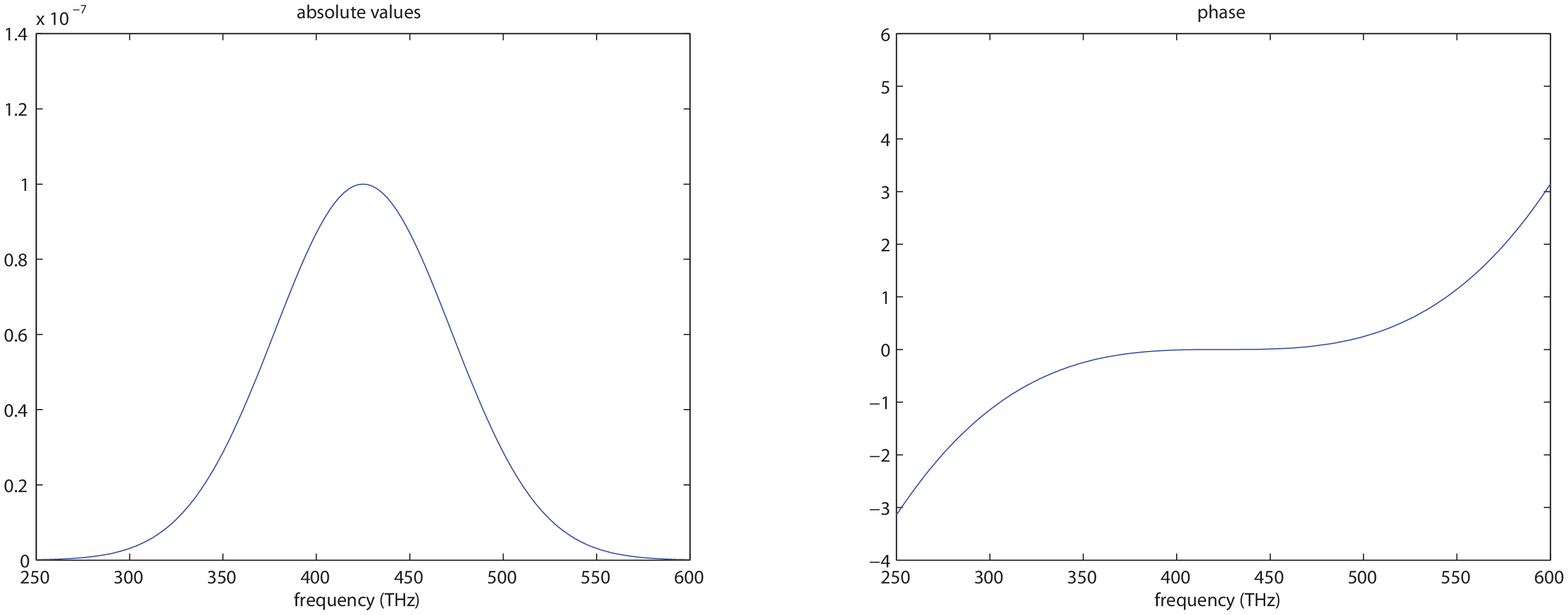}}
\subfigure[Result of the autoconvolution of the fundamental pulse, where we added $5\%$ relative noise]{\includegraphics[width=0.9\linewidth]{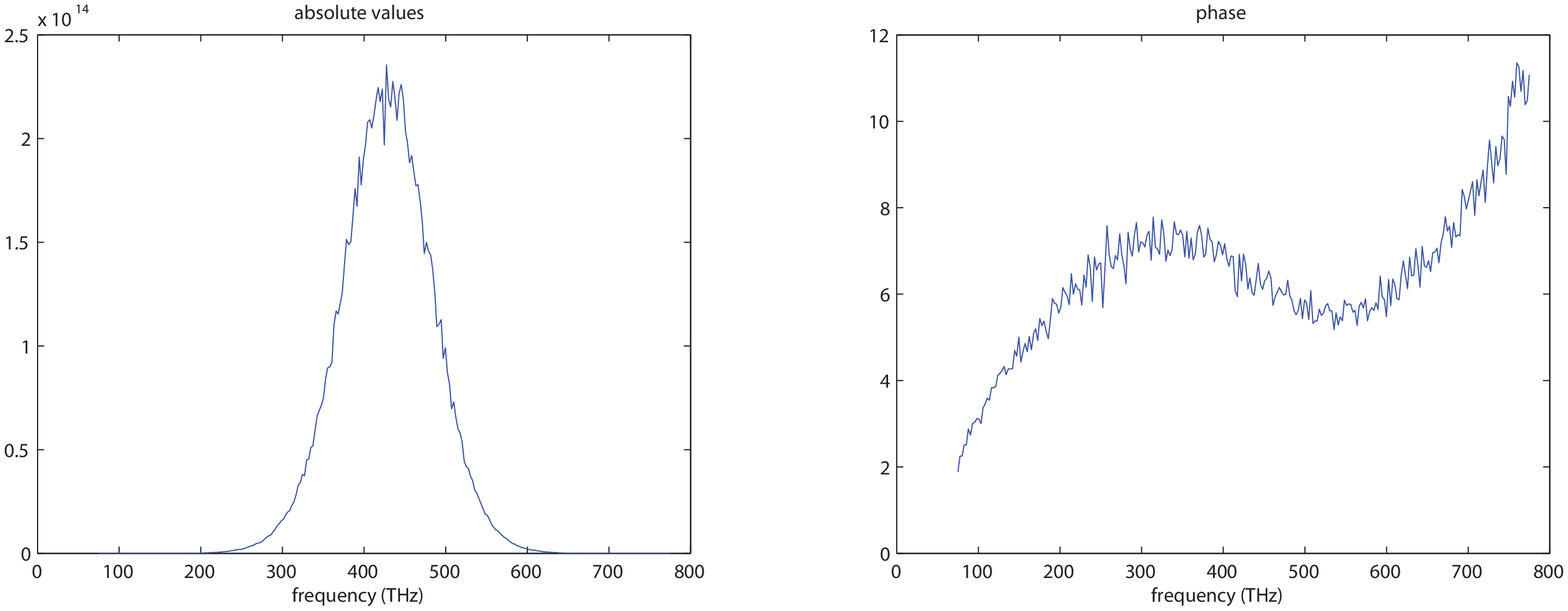}}
\subfigure[Reconstruction of the fundamental pulse and comparison with the true solution, $\delta=5\%$, $\hat{\alpha}=5.86\cdot10^{6}$]{\includegraphics[width=0.9\linewidth]{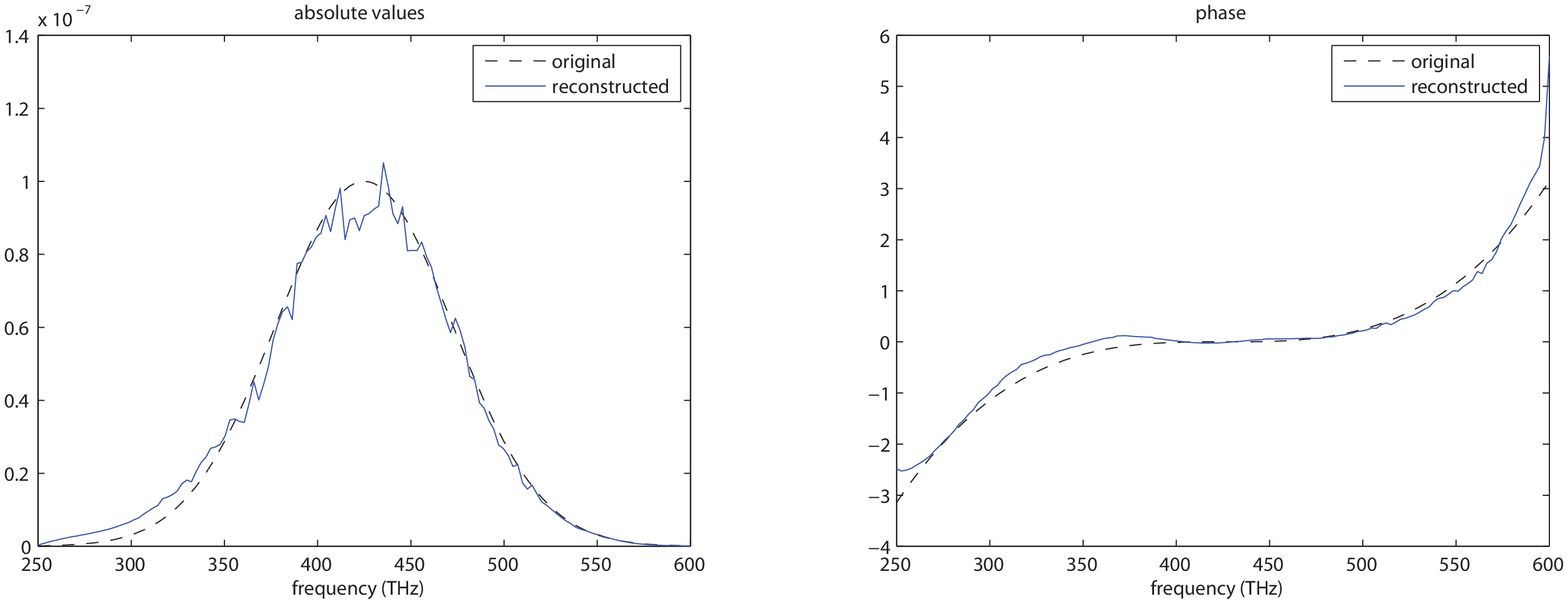}}\caption{First example using a very smooth fundamental pulse}
\end{figure}\label{fig:pulse_smooth}

\begin{figure}
\centering
\subfigure[Fundamental pulse]{\includegraphics[width=0.9\linewidth]{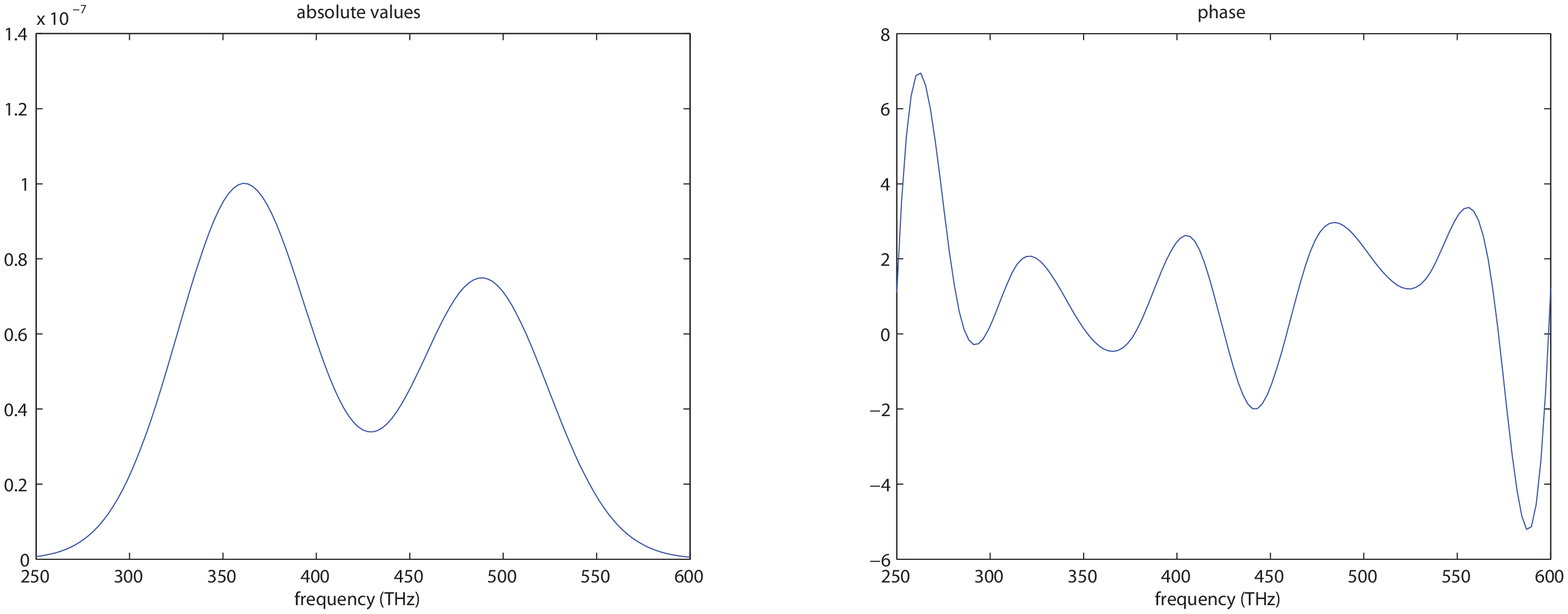}}
\subfigure[Result of the autoconvolution of the fundamental pulse]{\includegraphics[width=0.9\linewidth]{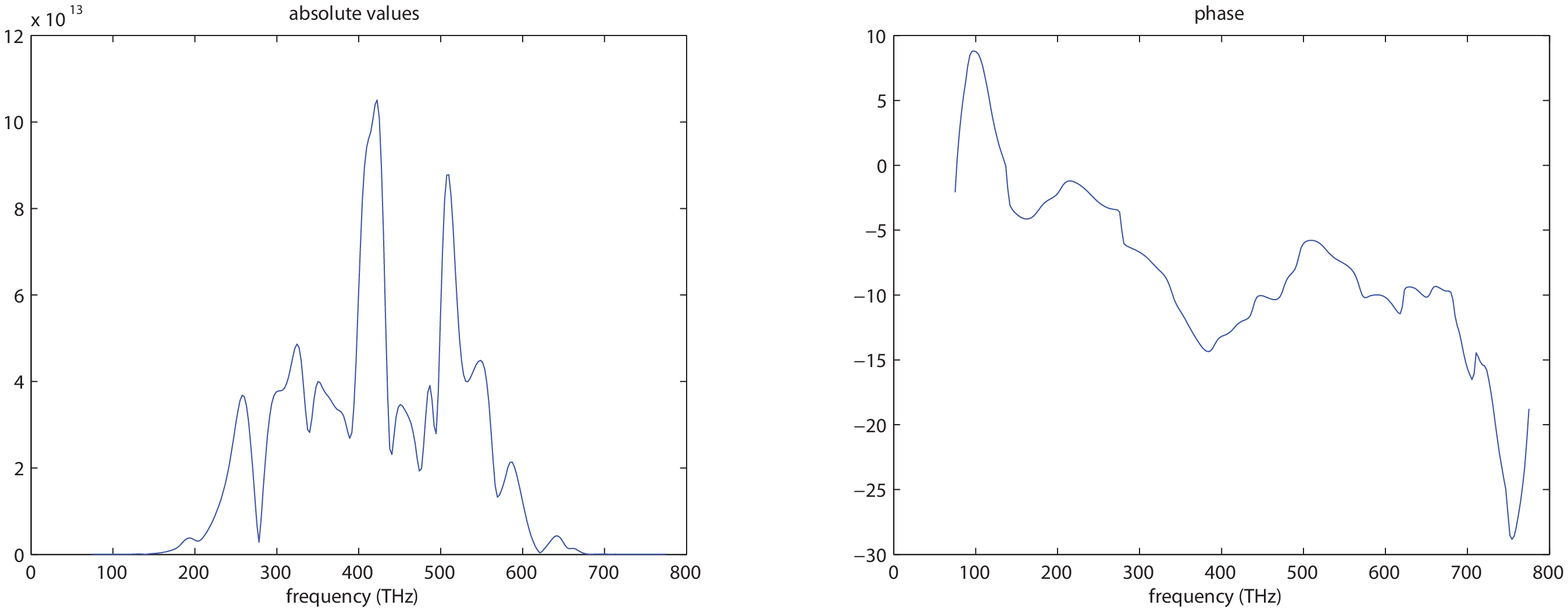}}
\subfigure[Reconstruction of the fundamental pulse and comparison with the true solution, noise-free, $\hat{\alpha}=2.17$]{\includegraphics[width=0.9\linewidth]{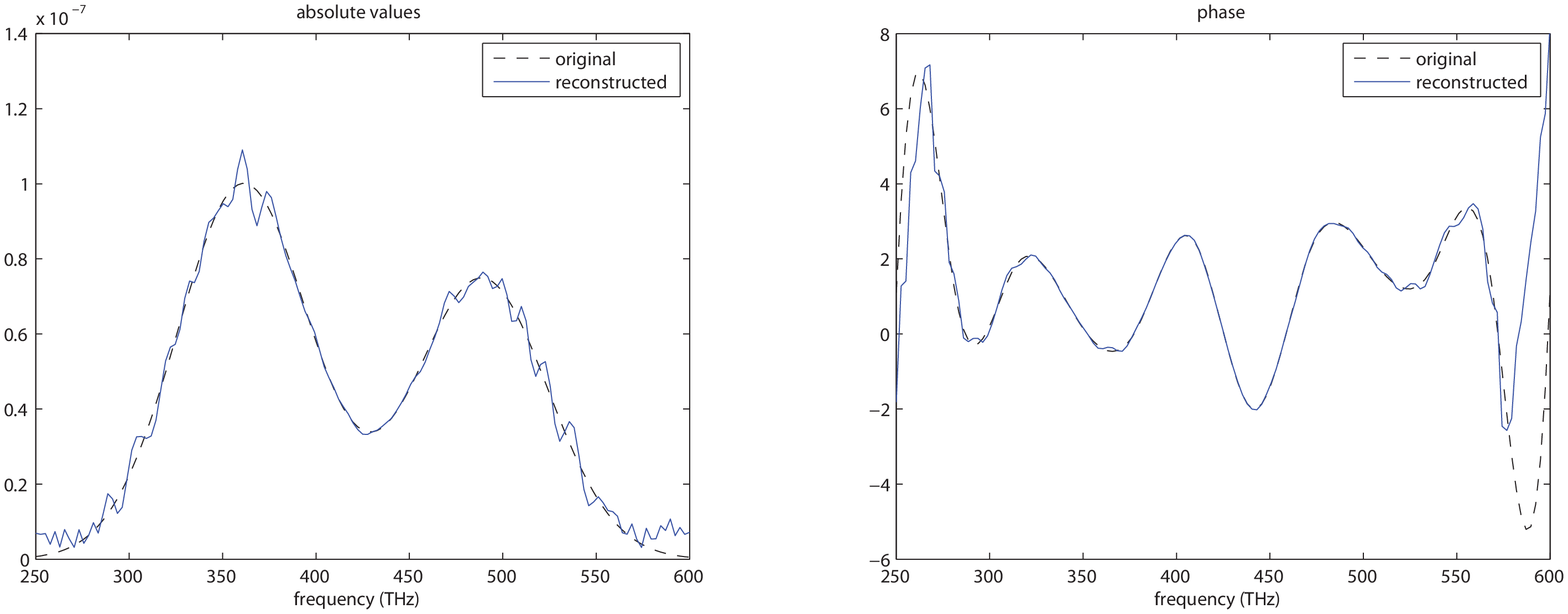}}
\subfigure[Reconstruction of the fundamental pulse and comparison with the true solution with noise $\delta=1\%$, $\hat{\alpha}=2.82$]{\includegraphics[width=0.9\linewidth]{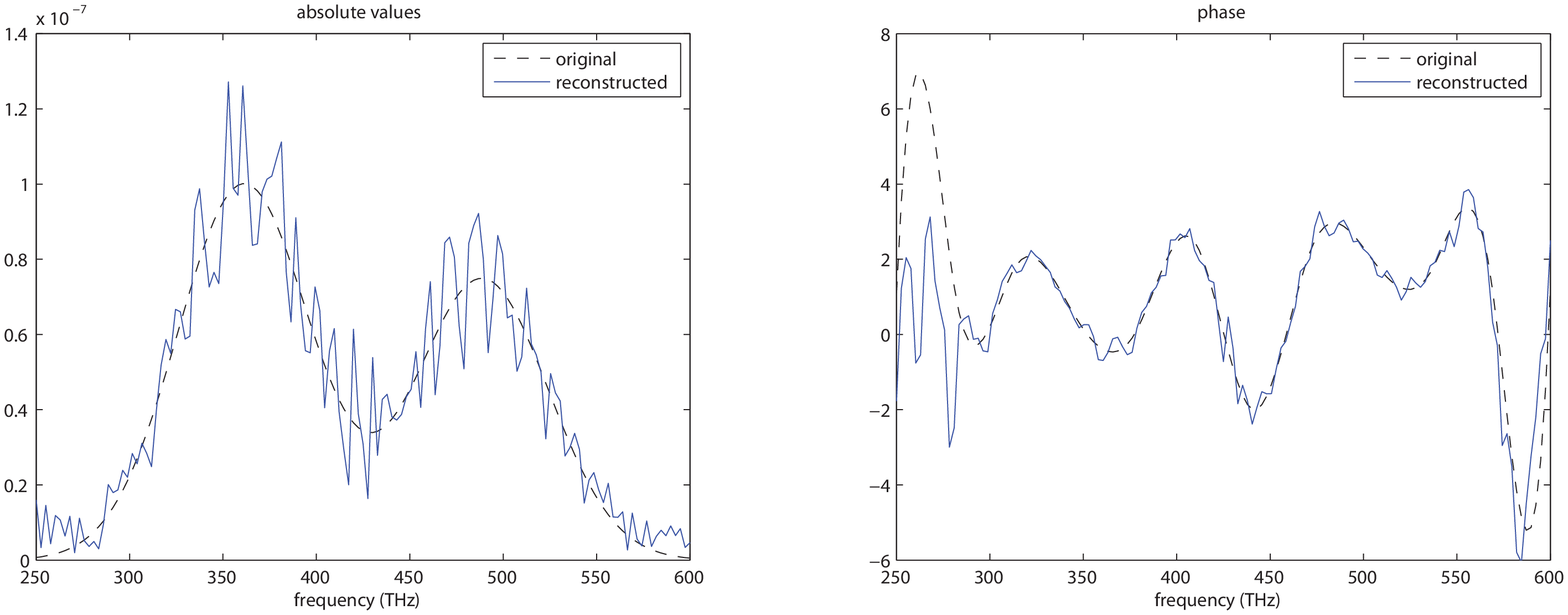}}\caption{Second example with a more complicated fundamental pulse. See Table~1 for characteristic values of certain iterations.}
\end{figure}\label{fig:pulse_complex}

\begin{table}\label{table:iterations}
\begin{tabular}{|c|c|c|c|}
\hline
Iteration & Residual & Solution smoothness & Deviation of absolute values\\
${\scriptscriptstyle(l)}$ &  ${\scriptscriptstyle ||\underline F(\underline x^\delta_{(l)})-\underline y^\delta||}$ &   ${\scriptscriptstyle ||\underline L x^\delta_{(l)} ||}$ &   ${\scriptscriptstyle ||\hspace{3pt} |\underline x^\delta_{(l)}|-\hat {\underline A}||}$  \\\hline
1 & 9.5819e-01 & 1.6806 & 0.5252\\
20 & 2.4115e-02 & 4.3935 & 0.7916 \\
40 & 2.0682e-02 & 4.9909 & 0.7937 \\
60 & 1.5369e-02 & 4.6496 & 0.6077 \\
80 & 2.2482e-03 & 3.9273 & 0.2563 \\
100 & 1.3792e-03 & 3.7857 & 0.1964 \\
120 & 1.1022e-03 & 3.9429 & 0.1701 \\
140 & 9.4595e-04 & 4.3199 & 0.1623 \\
142 & 9.3083e-04 & 4.3642 & 0.1622 \\
143 & 9.2340e-04 & 4.3865 & 0.1622 \\
144 & 9.1606e-04 & 4.4090 & 0.1623 \\
150 & 8.7480e-04 & 4.5451 & 0.1632 \\
200 & 5.7859e-04 & 5.8612 & 0.1832 \\
250 & 3.1613e-04 & 7.2540 & 0.2020\\\hline
\end{tabular}\caption{Characteristic values of the iteration process for the solution in Fig.~3(d).\\ 
All values have been normalized to simplify comparison. Since the iteration is aimed at minimizing the residuals, the values of the residuals decrease steadily during the process. The iteration must be terminated early enough,
otherwise the solution smoothness gets lost for late iterates. Moreover, the solutions increasingly oscillate as a consequence of the ill-posedness. On the other hand, the deviation of the absolute values decreases after some
 starting phase until they reach some turning point. At iteration $l_*=143$ the minimum is reached, and according to our stopping rule we choose this as the final solution. If the iteration is continued the difference in absolute values increases again, and the reconstructions tend to the worse.}
\end{table}

\subsection{Case I: A very smooth pulse}
As a first example we consider a very smooth pulse shown in Figure~3(a). 
The absolute values possess only one peak and the values of the phase increase from $-\pi$ and $\pi$. It should be mentioned that both functions the phase and its first derivative are zero in the middle of the frequency domain. Figure~3(b)
shows the result of the autoconvolution including the added noise. The noise level here is $\delta=5\%$. The best reconstruction was achieved for $\hat{\alpha}=5.86\cdot10^{6}$. Since we are looking for a very smooth solution and use a smoothness penalty for the regularization, the parameter is very large. The result is given in Figure~3(c). 
We were able to recover the phase in an acceptable way. Although there are problems for lowest and highest frequencies, the middle part is retrieved close to the original phase.	Due to the fairly large value of the regularization parameter, only few remainders of the highly oscillating noise (cf. Figure~3(b)) 
are still visible.

\subsection{Case II: An oscillating phase}
As a second example we have a more complex situation in mind. Here we choose an amplitude function with two peaks leading to the noisy function $\hat A(q)$. The phase function shown in Figure~4(a) 
has more of a sinusoidal structure.
Figure~4(b) again shows the result after the autoconvolution, this time without any noise. Even in this case we need an appropriate parameter $\hat{\alpha}$ to recover the phase in an acceptable way. The noise-free reconstruction is shown in Figure~4(c) with $\hat{\alpha}=2.17$. While in the middle part both phases coincide nearly perfectly, problems arise at the boundaries as we already observed in the previous example. Because of the structure of the autoconvolution equation, the amount of information on the boundaries is much lower than in the middle, c.f. (\ref{eq:conv_discrete}) and (\ref{eq:matrix_discrete}). The higher the noise level $\delta$, the more severe those problems become. In the last figure a reconstruction for $\delta=1\%$ and  $\hat{\alpha}=2.82$ is shown where a lot of noise artifacts remained. However, it seems that the absolute values are influenced more by the measurement errors than the phase. Numerical values for certain iterations are shown in Table ~1. Especially our criterion to
stop the iteration for fixed $\alpha$ is clearly visible there. Namely, at $l_*=143$, the difference of the absolute values of the iterate to the noisy reference values reaches its minimum. The solution at this iteration is the one shown in Figure ~4(d).

\bigskip

\section{Conclusions}
In this paper, we have studied a new type of kernel-based autoconvolution problems, for which a stable approximate solution is required for measurements of ultrashort laser pulses with the self-diffraction SPIDER method. The problem is formulated as a nonlinear integral equation with complex-valued functions over a finite real interval on both sides of the equation. With respect to the mathematical model, the novelty of this inverse problem consists in the occurrence of a physically motivated complex kernel function and the observability of the amplitude of the incident pulse such that the focus of the paper is on phase reconstruction.  After reviewing recent developments in ultrashort pulse characterization we have introduced an appropriate physical model (\ref{eq:spiderfreqsimple}) with  kernel (\ref{eq:spiderkernel}) and its mathematical description as a nonlinear operator equation. Using this abstract setting, which supports the mathematical analysis of the problem, we outlined by
Example~\ref{ex:localill} the local ill-posedness of the inverse problem under consideration. That is, for given data the solution cannot be approximated arbitrarily precise, even if the noise on the data is sufficiently small. Titchmarsh's convolution theorem formulated for our model in Lemma~\ref{lem:tit} enabled us to show that the autoconvolution equation has two complex conjugate solutions in case of a trivial kernel. However, the situation for arbitrary kernels seems to be an open question. A  main goal of this article was to suggest and test an adapted regularization approach. We used a variant of the Levenberg-Marquardt algorithm (\ref{eq:LMit}) in the discretized form (\ref{eq:LMit_discrete}). A crucial point in the regularization process is the choice of the regularization parameter $\alpha$. Since the amplitude function can be measured as part of the solution, we could motivate a specific parameter choice rule (\ref{eq:rule}). Under all regularized solutions, which are calculated for varying $\alpha>0$, we recommend to take the one that approximates the observed absolute values in an optimal way. It is known that iterates for regularized solutions to ill-posed problems tend to worsen whenever the iteration is not stopped early enough. To find an appropriate stopping rule, we monitored the deviation between the measured and calculated absolute values, which reaches a minimum during the iteration process and then increases again. This minimum is used to stop the iteration for fixed $\alpha$. Table~1 illustrates such behavior for some numerical example. Case studies with two different pulses and different noise levels indicate opportunities and limitations of our regularization method.

From a physicist's point of view, the formalism described here opens a perspective for using a wider class of nonlinearities for the SPIDER pulse characterization method, making it more universally applicable. $\chi^{(3)}$ nonlinearities such as the self-diffraction process are much more widespread than  $\chi^{(2)}$ processes that have been used nearly exclusively for SPIDER so far. With the solution of the $\chi^{(3)}$ SPIDER autoconvolution problem at hand, the integration of a complete optical pulse characterization set-up in an integrated optical device becomes feasible. As SPIDER is the only method that does not require mechanical variation of an optical delay this enables, in principle, the construction of an optical oscilloscope on chip, of course with the requirement to externally process the measured data by the discussed regularization procedure. Moreover, there appear other interesting applications of $\chi^{(3)}$ SPIDER variants for the characterization of broadband ultraviolet pulses, where $\chi^{(2)}$ processes do not offer viable options. For all these intriguing applications, we now demonstrated a viable way of retrieving the relevant phase data by regularization of the respective autoconvolution problem.

\section*{Acknowledgements}
We thank four referees for reading the paper carefully. The given hints allowed us to state the aim of the paper more precisely.
B.~Hofmann was partly supported by Grant 1454/8-1 from
Deutsche Forschungsgemeinschaft (DFG). D.~Gerth was funded by the Austrian Science Fund (FWF): W1214-N15, project DK08. G.~Steinmeyer gratefully
acknowledges support by the Academy of Finland (Project Grant No. 128844).


\label{lastpage}

\end{document}